\journal{Journal of Geometry and Physics}
\newcommand{\me}{\mathrm{e}}
\newcommand{\mi}{\mathrm{i}}
\newcommand{\dif}{\mathrm{d}}
\DeclareMathAlphabet{\mathsfsl}{OT1}{cmss}{m}{sl}
\newcommand{\tensor}[1]{\mathsfsl{#1}}
\newtheorem{definition}{Definition}[section]
\newtheorem{theorem}{Theorem}[section]
\newtheorem{proposition}{Proposition}[section]
\begin{document}
\begin{frontmatter}

\title{Infinite-dimensional Hamiltonian description of a class of dissipative mechanical systems}

\author[add1]{Tianshu Luo\corref{cor1}}
\ead{ltsmechanic@zju.edu.cn}

\author[add1]{Yimu Guo}
\ead{guoyimu@zju.edu.cn}

\cortext[cor1]{Corresponding author}

\address[add1]{Institute of Applied Mechanics, Department of Mechanics, Zhejiang University, Hangzhou, Zhejiang, 310027,  P.R.China}

\begin{abstract}

In this paper an approach is proposed to represent a class of dissipative mechanical systems by corresponding infinite-dimensional Hamiltonian systems.
This approach is based upon the following structure: for any non-conservative classical mechanical system and arbitrary initial conditions, there 
exists a conservative system; both systems share one and only one common phase curve; and, the value of the Hamiltonian of the conservative 
system is, up to an additive constant, equal to the total energy of the non-conservative system on the aforementioned phase curve, the constant depending 
on the initial conditions. We describe in detail this relationship calling the conservative system ``substitute`` conservative system. 
By considering the dissipative mechanical system as a special fluid in a domain $D$ of the phase space, viz. a collection of particles in this domain, 
we are prompted to develop this system as an infinite-dimensional Hamiltonian system of an ideal fluid.  By comparing the description of the 
ideal fluid in Lagrangian coordinates, we can consider the Hamiltonian and the Lagrangian as the respective integrals of the Hamiltonian and 
the Lagrangian of the substitute conservative system over the initial value space and define a new Poisson bracket to express the equations of motion in
Hamiltonian form. The advantage of the approach is that the value of the canonical momentum density $\pi$ is identical with
 that of the mechanical momentum $m\dot{q}$ and the value of canonical coordinate $q$ is identical with that of the coordinate of the dissipative
mechanical system. Therefore we need not to decouple the Newtonian equations of motion into several one-dimensional ordinary differential equations.
\end{abstract}
\begin{keyword}
Infinite-dimensional Hamiltonian formalism, dissipation, non-conservative system, damping
\MSC 70H05,70H33
\end{keyword}
\end{frontmatter}


\maketitle

\section{Introduction}

Since the moment Hamilton developed his formalism and equations of motion for classical mechanics, most textbooks
have focused on applying the Hamiltonian formalism in solving solely conservative problems.

\cite{PhysRev.38.815}\cite{book3} gave an example of an artificial Hamiltonian for a damped oscillator based on 
a ``mirror-image'' trick, incorporating a second oscillator with negative friction. The damped oscillator can be represented as 
following:
\begin{equation}
 \ddot{x}+2\eta \dot{x}+\omega^2 x^2=0, \ \ \eta>0.
\label{eq:1d_d_oscl}
\end{equation}
The resulting Hamiltonian is unphysical in the sense that it is unbounded from below (the ground state has infinite negative energy)
 and under time reversal the oscillator is transformed into its ``mirror-image''. 
By this arbitrary trick, dissipative systems can be handled as though they were conservative.  
For the system (\ref{eq:1d_d_oscl}), we have
\begin{eqnarray}
 \ddot{x}+2\eta \dot{x}+\omega^2 x^2&=&0 \ \ (original) \\
 \ddot{y}-2\eta \dot{y}+\omega^2 y^2&=&0 \ \ (mirror-image).
\label{eq:Bateman}
\end{eqnarray}
Associated with these equations of motion is the Bateman(-Morse-Feshbach) Lagrangian
\begin{equation}
 L_B(x,\dot{x},y,\dot{y})=\dot{x}\dot{y}+\eta(x\dot{y}-\dot{x}y)-m\omega^2xy,
\label{eq:L_batm}
\end{equation}
and the two momenta
\[
 p_1=\dot{y}-\eta y,\ \ \ \ p_2=\dot{x}-\eta x.
\]
Therefore a Hamiltonian is defined as
\[
 H=p_1\dot{x}+p_2\dot{y}-L=\dot{x}\dot{y}+\omega^2 xy=(p_1+\eta y)(p_2-\eta x)+\omega^2 xy
\]
Since $y$ increase in amplitude as fast as $x$ decreases, then $H$ will stay constant.

Over six decades ago, \cite{PCalirola1941}\cite{1948PThPh...3..440K} adopted the Hamiltonian
\begin{equation}
 H_{ck}(q,p)=\frac{1}{2}\left(e^{-2\eta t}p^2+e^{2\eta t}\omega^2 q^2\right),
\label{eq:Hck}
\end{equation}
which leads exactly to Eq. (\ref{eq:1d_d_oscl}).
With this Hamiltonian, the canonical momentum is defined as
\[
 p_{ck}=e^{2\eta t}p
\]
It is rather remarkable that \cite{1986AmJPh..54..273H} utilized the same variational principle implicit in the above, but did not 
change or redefine the Lagrangian and momentum. The action was defined as
\[
 I=\int_{t_1}^{t_2} \me^{2\eta t}L\dif t.
\]
Not changing the definition of momentum may be potentially a good characteristic.

In the 1960s, the paper of \cite{art3} utilized the classical Hamiltonian formalism and a perturbation theory to solve a non-conservative problem. They did 
not attempt to derive the Hamiltonian formalism for non-conservative problems. Several authors have attempted to expand the scope of 
Hamiltonian formalism to include dissipative problems. 

Some significant works in this area were done by \cite{Vujanovic1970,Vujanovic1978} 
and \cite{Djukic1973,Djukic1975,Djukic1975a}.  They have proposed a technique for systems with gauge variant Lagrangian. 
 \cite{Vujanovic1970,Vujanovic1978} defined a universal Lagrangian
\[
\bar{L}\left(\bar{x},\frac{\dif \bar{x}}{\dif \bar{t}},\bar{t}\right)=L(x,\dot{t},t)+\frac{\dif}{\dif t}f(t,x,\dot{x}),
\]
where the function $f$ is known as the gauge function, $L$ satisfied the equations
\[
 \frac{\dif}{\dif t}\frac{\partial L}{\partial \dot{x}}-\frac{\partial L}{\partial x}=F(t,x,\dot{x}).
\]
In the equation above, $F$ is a nonconservative force that is not derivable from a potential function. Their technique might 
include the aforementioned cases. \cite{Mukherjee2006a} had deemed the technique as being rather algebraic in nature. To overcome the limitations,
 Earlier \cite{Mukherjee1994} had proposed a modified equation that overcomes that limitation by introducing an additional time-like variable 
called `umbra time' and extended this notion to the co-kinetic  kinetic, potential, complimentary energies as well as the Lagrangian itself.
 Following up on this work, \cite{Mukherjee1997} introduced a procedure by which to obtain the umbra-Lagrangian via system bond graphs, thus 
extending the basic idea of \cite{Karnopp1977}. \cite{Mukherjee2001} consolidated this idea and presented an important notion of 
invariants of motion. A gauge variant Lagrangian implies a new definition for canonical momentum, which might not be identical 
with the mechanical momentum.

\cite{McLachlan2001276} proposed conformal Hamiltonian notion to describe damping mechanical system. In conformal Hamiltonian 
description, a new Poisson bracket was defined. Vector fields whose flow preserves a symplectic form up to a constant, such as 
simple mechanical systems with friction, are called “conformal”. They developed a reduction theory for symmetric conformal 
Hamiltonian systems, analogous to symplectic reduction theory.

\cite{art7} considered that a large class of dissipative systems that can be brought to a canonical form by introducing complex
 coordinates in phase space and a complex-valued Hamiltonian. He showed indicated that Eq. (\ref{eq:1d_d_oscl}) can be brought to diagonal
 form by a linear transformation:
\begin{equation}
 z=A\left[-\mi(p+\eta x)+\omega_1 x\right], \frac{\dif z}{\dif t}=\left[-\gamma+\mi \omega_1\right]z,
\label{eq:complex_transform}
\end{equation}
where 
\begin{equation}
 \omega_1=\sqrt{\omega_2-\gamma^2},
\end{equation}
and the constant $A=1/\sqrt{2\omega_1}$. Next, the complex-valued function as Hamiltonian function is defined as
\begin{equation}
 \mathcal{H}=(\omega_1+\mi \eta)zz^*,
\end{equation}
which satisfies
\[
 \frac{\dif z}{\dif t}=\left\lbrace \mathcal{H},z \right\rbrace, \ \ \frac{\dif z^*}{\dif t}=\left\lbrace \mathcal{H},z^* \right\rbrace
\]

The previous works of \cite{PhysRev.38.815},\cite{book3},\cite{PCalirola1941},\cite{1948PThPh...3..440K},\cite{Vujanovic1970,Vujanovic1978} 
,\cite{McLachlan2001276} leaded to a finite-dimensional Hamiltonian mechanics with symplectic structure(standard Hamiltonian mechanics). 
Employing these approaches, redefinition of momentum is generally required. Our aim here is to seek an approach without redefinition of momentum by applying the formalism of 
infinite-dimensional Hamiltonian mechanics without symplectic structure to classical 
dissipative mechanical problems. Dissipative mechanical systems, sometimes called Chetayev systems, are usually represented as
\begin{equation}
 \ddot{\bm{q}}+\tensor{C}\dot{\bm{q}}+\tensor{K}\bm{q}=0,
\label{eq:eq1}
\end{equation}
where $\bm{q}$ is a vector formed from the generalized coordinates identified in the system, 
$\tensor{C}$ denotes a non-linear damping coefficient matrix which depends on $\bm{q}$, and $\tensor{K}$ denotes a non-linear stiffness 
matrix also depending on $\bm{q}$ and consists of two parts $\tensor{K}=\tensor{\check{K}}+\tensor{\hat{K}}$($\tensor{\check{K}}$ is a diagonal matrix). 
In light of a proposition put forward by \cite{2010arXiv1007.2709L}, we show how to represent the dissipative mechanical 
system (\ref{eq:eq1}) as an infinite-dimensional Hamiltonian system. This proposition asserts that for any dissipative 
classical mechanical system with arbitrary initial condition, there exists an associated conservative system; both systems 
sharing one and only one common phase curve; the energy of the conservative system is the sum of the total energy of the 
dissipative system on the aforementioned phase curve and a constant depending on the initial condition. In sec. \ref{sec:2} 
a demonstration of the proposition is reported for the first time. In this demonstration, the equations below will be implemented 
\begin{eqnarray}
\dot{p}_i&=&- \frac{\partial H}{\partial q_i}+\bm{F}\left( \frac{\partial{r}}{\partial q_i} \right) \nonumber \\
\dot{q}_i&=& \frac{\partial H}{\partial p_i},
\label{eq:eq2}
\end{eqnarray}
where $\lbrace q_i,p_i \rbrace$ denote the coordinate and momentum, and the position vector $r$ depends on $\lbrace q_i\rbrace$, 
i.e. $r(q_i)$, $H$ denotes the mechanical energy, and $\bm{F} (\partial{r}/ \partial{q_i})$ denotes the $i^{th}$ generalized force which 
depends on $q_i,p_i$. \cite{Marsden2007}  and other researchers applied these equations (\ref{eq:eq2}) to the problem related to stability 
in dissipative systems.\cite{Marsden2007} considered systems described by that Eq. (\ref{eq:eq2}) that were composed of a conservative part 
and a non-conservative part. However, Eq. (\ref{eq:eq2}) apparently is not a set pf Hamilton's equations of motion, but can be 
considered as a representation of the dissipative mechanical system (\ref{eq:eq1}) in phase space. Indeed the system (\ref{eq:eq2}) 
is more general than the system (\ref{eq:eq1}), and the system (\ref{eq:eq2}) is equivalent to the system of differential 
equations considered by \cite{JesseDouglas1941}
\[
 \ddot{y_i}=G_i(t,y_j,\dot{y_j}),
\]
which can be reduced to Eq. (\ref{eq:eq1}) by a first-order linearization procedure.
Analogous to the Hamiltonian description of an ideal fluid in Lagrangian variables and that of Poisson-Vlasov equations, 
we define the Lagrangian and Hamiltonian as an integral over the entire initial value space. The generalized coordinates 
and the canonical momentum will be thought of as functions of the initial value and time. A new Poisson bracket will be 
defined to represent Eq. (\ref{eq:eq1}) as an infinite-dimensional 
Hamiltonian system. This process will be presented in detail in Sec. \ref{sec:3}.

\section{Corresponding Conservative Mechanical Systems}\label{sec:2} 
\subsection{Common Phase Flow Curve}\label{sec2.1}
To begin, we need to represent Eq. (\ref{eq:eq1}) in the form Eq. (\ref{eq:eq2}). Under general circumstances, the force $\bm{F}$ is 
assumed to be a non-conservative force that depends on the variable set $q_1,\cdots,q_n,\dot{q}_1,\cdots,\dot{q}_n$
\[
 \bm{F}=-\tensor{C}\dot{\bm{q}}-\tensor{\hat{K}}\bm{q},  
\]
and the mechanical energy $H$ is
\[
 H=\frac{1}{2}\bm{p}^T\bm{p}+\frac{1}{2}\bm{q}^T\tensor{\check{K}}\bm{q},
\]
where superscript $T$ denotes matrix transposition.
We denote by $F_i$ the components of the generalized force $\bm{F}$.
\begin{equation}
F_i(q_1,\cdots,q_n,\dot{q}_1,\cdots,\dot{q}_n)=\bm{F}\left( \frac{\partial{r}}{\partial q_i}\right).
\label{eq:inth-1}
\end{equation}
Thus we can reformulate Eq. (\ref{eq:eq2}) as follows:
\begin{eqnarray}
\dot{p}_i&=& -\frac{\partial H}{\partial q_i}
+F_i(q_1,\cdots,q_n,\dot{q}_1,\cdots,\dot{q}_n) \nonumber \\
 \dot{q}_i&=&\frac{\partial H}{\partial p_i}.
\label{eq:inth-2}
\end{eqnarray}
Suppose that we have the Hamiltonian, denoted by $\hat{H}$ of a conservative system non-damped system. Thus we
 may write down the Hamilton's equation of motion for this conservative system:
\begin{eqnarray}
\dot{p}_i &=& -\frac{\partial {\hat{H}}}{\partial q_i} \nonumber \\
\dot{q}_i &=&\frac{\partial \hat{H}}{\partial p_i}.
\label{eq:inth-3}
\end{eqnarray}
We assume that the standard definition of momentum in Hamiltonian formulation still holds, but we do require that a special solution  
of Eq. (\ref{eq:inth-3}) is the same as that of Eq. (\ref{eq:inth-2}).
We also assume that a phase curve\footnote{This concept appears in the book of \cite{Arnold_ode}:``The orbits of a phase flow called phase curves''} 
$\gamma$ of Eq. (\ref{eq:inth-2}) coincides with 
that of Eq. (\ref{eq:inth-3}). This phase curve $\gamma$ corresponds to an initial condition $q_{i0},p_{i0}$. 
Consequently by comparing Eq. (\ref{eq:inth-2}) and Eq. (\ref{eq:inth-3}), we have
\begin{eqnarray}
\left.\frac{\partial{\hat{H}}}{\partial{q_i}}\right|_{\gamma} &=&
\left.\frac{\partial H}{\partial q_i}\right|_\gamma-\left. F_i(q_1,\cdots,q_n,\dot{q}_1,\cdots,\dot{q}_n)\right|_\gamma \nonumber \\
\left.\frac{\partial{\hat{H}}}{\partial{p_i}}\right|_\gamma&=&
\left.\frac{\partial H}{\partial p_i}\right|_\gamma,
\label{eq:inth-4}
\end{eqnarray}
where $\left.\frac{\partial{\hat{H}}}{\partial{q_i}}\right|_{\gamma},\left.\frac{\partial H}{\partial q_i}\right|_\gamma
,\left.\frac{\partial{\hat{H}}}{\partial{p_i}}\right|_\gamma$ and $\left.\frac{\partial H}{\partial p_i}\right|_\gamma$ denote the values
of these partial derivatives on the phase curve $\gamma$ and similarly $\left.F_i(q_1,\cdots,q_n,\dot{q}_1,\cdots,\dot{q}_n)\right|_\gamma$ denotes 
the value of the force $F_i$ on the phase curve $\gamma$. 
In classical mechanics, the mechanical energy $H$ of the system (\ref{eq:inth-2}) 
can be evaluated via the following equation
\begin{equation}
 H=\int_{\gamma}\left(\frac{\partial{H}}{\partial{q_i}}\right)\dif q_i
+\int_{\gamma} \left(\frac{\partial H}{\partial p_i}\right)\dif p_i+const_1,
\label{eq:inth-5}
\end{equation}
where $const_1$ is a constant that depends on the initial condition described above. 
The Einstein summation convention is being employed here and in the rest of this section. If $q_i=0,p_i=0$, then $const_1=0$.
The Hamiltonian $\hat{H}$ of the conservative mechanical 
system (\ref{eq:inth-3}) is mechanical energy and can be written as:
\begin{equation}
 \hat{H}=\int_{\gamma}\left(\frac{\partial{\hat{H}}}{\partial{q_i}}\right)\dif q_i
+\int_{\gamma} \left(\frac{\partial{\hat{H}}}{\partial p_i}\right)\dif p_i+const_2,
\label{eq:inth-7}
\end{equation}
where $const_2$ is a constant that depends on the initial condition described above. 
The following equations can be derived from Eq. (\ref{eq:inth-4})
\begin{eqnarray}
 \int_{\gamma}\left(\frac{\partial{\hat{H}}}{\partial{q_i}}\right)\dif q_i
&=&\int_{\gamma}\left[\left(\frac{\partial H}{\partial q_i}\right)
-F_i(q_1,\cdots,q_n,\dot{q}_1,\cdots,\dot{q}_n)\right]\dif q_i \nonumber \\
 \int_{\gamma} \left(\frac{\partial \hat{H}}{\partial p_i}\right)\dif p_i
&=&\int_{\gamma} \left(\frac{\partial H}{\partial p_i}\right)\dif p_i.
\label{eq:inth-6}
\end{eqnarray}
 
Substituting Eqs.(\ref{eq:inth-5})and Eqs.(\ref{eq:inth-6}) into Eq. (\ref{eq:inth-7}), we have
\begin{equation}
 \hat{H}=H-\int_{\gamma}F_i(q_1,\cdots,q_n,\dot{q}_1,\cdots,\dot{q}_n)\dif q_i+const.
\label{eq:inth-8}
\end{equation}
where $const=const_2-const_1$.
According to the physical interpretation of this Hamiltonian, $const_1$, $const_2$ and $const$ are added to Eq. (\ref{eq:inth-5})(\ref{eq:inth-7})(\ref{eq:inth-8})
respectively so that the integral constant vanishes in the Hamiltonian quantity. 

\cite{Arnold1997} had presented the Newton-Laplace principle of determinacy as, 
'This principle asserts that the state of a mechanical system at any fixed moment of time uniquely
determines all of its (future and past) motion.' In other words, in phase space the position and the velocity variables are 
functions of only the time $t$. We can therefore assume that we already have a solution of Eq. (\ref{eq:inth-2})
\begin{eqnarray}
 q_i&=&q_i(t) \nonumber \\
 \dot{q_i}&=&\dot{q_i}(t),
\label{eq:curve}
\end{eqnarray}
where the solution satisfies the initial conditions. We can divide the whole time domain into a group of sufficiently small domains and 
in these domains, $q_i$ is monotonic, and hence we can assume an inverse function $t=t(q_i)$. If $t=t(q_i)$ is substituted into the 
non-conservative force $\left.F_i\right|_{\gamma}$, we find that:
\begin{equation}
\left.F_i(q_1(t(q_i)),\cdots,q_n(t(q_i)),\dot{q}_1(t(q_i)),\cdots,\dot{q}_n(t(q_i)))\right|_{\gamma}= \mathcal{F}_i(q_i),
\label{eq:asumption}
\end{equation}
where $\mathcal{F}_i$ is a function of $q_i$ alone. In Eq. (\ref{eq:asumption}) the function $F_i$ is restricted to the curve $\gamma$, such that a new function
 $\mathcal{F}_i(q_i)$ yields. Thus we have
\begin{eqnarray}
 \int_{\gamma}F_i \dif q_i&=&\int_{q_{i0}}^{q_i}\mathcal{F}_i(q_i) \dif q_i
=W_i(q_i)-W_i(q_{i0}).
\label{eq:inth-9}
\end{eqnarray}
According to Eq. (\ref{eq:inth-9}) the function $\mathcal{F}_i$ is path independent, and therefore $\mathcal{F}_i$ can be regarded as a conservative force. 
From this, Eq. (\ref{eq:asumption}) represents an identity map from the non-conservative force $F$ on the curve $\gamma$ 
to the conservative force $\mathcal{F}_i$ which is distinct from $F_i$. Eq. (\ref{eq:asumption}) is tenable only on the phase curve $\gamma$.
 Consequently the function form of $\mathcal{F}_i$ depends on the aforementioned initial condition; from other initial conditions, $\mathcal{F}_i$
with different function forms will be obtained. 

Again, according to the physical meaning of the Hamiltonian, a $const$ is added to Eq. (\ref{eq:inth-8}) so
that the integral constant vanishes in the Hamiltonian quantity. Hence setting $const=-W_i(q_{i0})$ then
substituting this choice and Eq. (\ref{eq:inth-9}) into Eq. (\ref{eq:inth-8}), we have
\begin{equation}
\hat{H}=H-W_i(q_i)
\label{eq:inth-10}
\end{equation}
where $-W_i(q_i)$, a function of $q_i$ denotes the potential of the conservative force $\mathcal{F}_i$; $W_i(q_i)$ is then equal to the sum of the work done by the 
non-conservative force $F$ and $const$. In Eq. (\ref{eq:inth-10}) $\hat{H}$ and $H$ are both functions of $q_i$ and $p_i$.
Eq. (\ref{eq:inth-10}) and Eq. (\ref{eq:inth-8}) can be thought of as a map from the total energy of the dissipative system (\ref{eq:inth-2}) involving 
the mechanical energy and the energy lost to the Hamiltonian of the conservative system (\ref{eq:inth-3}). Indeed, $\hat{H}$ and the total energy differ by 
the constant $const=-W_i(q_{i0})$. 

Based on the above, the following proposition can be stated:
\begin{proposition}
For any non-conservative classical mechanical system and arbitrary initial condition, there exists a conservative system; both systems sharing 
one and only one common phase curve; and the value of the Hamiltonian of the conservative system is equal to the sum of the total energy 
of the non-conservative system on the aforementioned phase curve and a constant depending on the initial condition.
\label{pro:1}
\end{proposition}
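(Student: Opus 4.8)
The plan is to read off the proposition from the explicit construction carried out in Eqs.~(\ref{eq:inth-1})--(\ref{eq:inth-10.1}), so the proof is essentially a reorganization of the preceding computation into the three asserted claims: (i)~existence of a conservative companion, (ii)~uniqueness of the common phase curve, and (iii)~the energy identity up to an additive constant. I would begin by fixing, once and for all, an arbitrary non-conservative system of the form~(\ref{eq:eq1}), rewritten in the phase-space form~(\ref{eq:inth-2}) with generalized force $F_i$ given by~(\ref{eq:inth-1}), together with an arbitrary initial condition $(q_{i0},p_{i0})$. By the Newton--Laplace determinacy principle invoked after Eq.~(\ref{eq:inth-8}), this initial condition singles out a unique phase curve $\gamma$, along which $t\mapsto(q_i(t),\dot q_i(t))$ is a well-defined solution~(\ref{eq:curve}).

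Next I would construct the conservative system explicitly. The key device is the reduction~(\ref{eq:asumption}): restricting $F_i$ to $\gamma$ and using the local invertibility $t=t(q_i)$ on intervals of monotonicity of $q_i$, one obtains a function $\mathcal{F}_i(q_i)$ of $q_i$ alone, whose line integral along $\gamma$ is therefore path-independent and equals $W_i(q_i)-W_i(q_{i0})$ as in~(\ref{eq:inth-9}). This exhibits $\mathcal{F}_i$ as a genuine conservative force with potential $-W_i(q_i)$, and hence defines a new Hamiltonian $\hat H = H - W_i(q_i)$ via~(\ref{eq:inth-10.1}) and the corresponding conservative system~(\ref{eq:inth-3}). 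One then checks that $\gamma$ solves~(\ref{eq:inth-3}) as well: differentiating $\hat H = H - W_i(q_i)$ gives $\partial\hat H/\partial p_i = \partial H/\partial p_i$ and $\partial\hat H/\partial q_i = \partial H/\partial q_i - \mathcal{F}_i(q_i)$, which on $\gamma$ coincide with the right-hand sides of~(\ref{eq:inth-2}) by construction; so~(\ref{eq:inth-3}) and~(\ref{eq:inth-2}) admit $\gamma$ as a common phase curve, proving~(i). For~(iii), the identity~(\ref{eq:inth-10}) together with the choice $const=-W_i(q_{i0})$ shows that on $\gamma$ the value of $\hat H$ equals the total energy $H|_\gamma$ plus the initial-condition-dependent constant, which is exactly claim~(iii).

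For uniqueness in~(ii), I would argue that any two phase curves of the conservative system~(\ref{eq:inth-3}) that agree at one point must coincide entirely (again by determinacy of the Hamiltonian flow), and likewise for~(\ref{eq:inth-2}); since the two vector fields differ precisely by the term $\mathcal{F}_i(q_i)$, which by~(\ref{eq:asumption}) was \emph{defined} to vanish as a discrepancy only along this particular $\gamma$, no second shared trajectory can exist --- a different initial condition produces a different $\mathcal{F}_i$ and hence a different conservative system.

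The step I expect to be the main obstacle is the passage~(\ref{eq:asumption})--(\ref{eq:inth-9}): asserting that $F_i$ restricted to $\gamma$ can be written as a bona fide function $\mathcal{F}_i(q_i)$ of the single variable $q_i$ requires $q_i(t)$ to be piecewise monotone so that $t=t(q_i)$ exists on each piece, and requires that the pieces be patched into a single-valued $\mathcal{F}_i$; for oscillatory solutions of~(\ref{eq:eq1}) this single-valuedness is delicate, since the same value of $q_i$ recurs with different velocities and hence different values of $F_i$. I would address this by restricting attention to a maximal interval on which the solution is monotone (as the excerpt does), treating the general case as a concatenation, and emphasizing that the construction --- and therefore the resulting conservative system --- is local to $\gamma$ and to the chosen initial data, which is all the proposition claims. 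Everything else is bookkeeping of the constants $const_1$, $const_2$, $const$ already done in~(\ref{eq:inth-5})--(\ref{eq:inth-10.1}).
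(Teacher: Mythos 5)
Your treatment of claims (i) and (iii) — constructing $\mathcal{F}_i(q_i)$ by restricting $F_i$ to $\gamma$ via the piecewise inverse $t=t(q_i)$, setting $\hat H = H - W_i(q_i)$, verifying Eq.~(\ref{eq:inth-4}) by differentiation, and tracking the constants $const_1$, $const_2$, $const=-W_i(q_{i0})$ — reproduces the paper's argument essentially verbatim, and your flagging of the single-valuedness problem for oscillatory $q_i(t)$ is a fair reading of the weakest link in that construction.

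The genuine gap is in your uniqueness argument for claim (ii). You assert that the discrepancy $F_i - \mathcal{F}_i(q_i)$ between the two vector fields ``was \emph{defined} to vanish \ldots only along this particular $\gamma$,'' and conclude that no second shared trajectory can exist. But the construction only guarantees that the discrepancy vanishes \emph{on} $\gamma$; it says nothing about whether it also happens to vanish along some other trajectory $\gamma'$ of both systems. The word ``only'' is precisely the conclusion to be proved, not a property built into Eq.~(\ref{eq:asumption}), so your argument is circular at exactly the point where uniqueness needs to be established. The standard ODE uniqueness you invoke (two curves agreeing at a point coincide) rules out intersecting common curves but not a second, disjoint one. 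The paper takes a different route here: it supposes two common phase curves $\gamma_1,\gamma_2$ exist, encloses points $z_1\in\gamma_1$ and $z_2\in\gamma_2$ in a domain $\Omega$, and argues that the flow of the dissipative system~(\ref{eq:inth-2}) restricted to $\Omega$ would then have to preserve volume because it coincides with the Hamiltonian flow $g^t$ of~(\ref{eq:inth-3}), contradicting Liouville's theorem (Theorem~\ref{Liouville}) for a genuinely dissipative system. Whatever one thinks of the rigor of that volume argument (a set containing only two points has zero volume, and two curves do not determine the flow on an open domain), it is the mechanism the paper relies on, and your proposal contains no substitute for it.
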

\begin{proof} 

To begin, we must prove the first part of the Proposition \ref{pro:1}, i.e. that a conservative system with Hamiltonian presented by Eq. (\ref{eq:inth-10.1})
shares a common phase curve with the non-conservative system represented by Eq. (\ref{eq:inth-2}). In other words,  
the Hamiltonian quantity presented by Eq. (\ref{eq:inth-10.1}) satisfies Eq. (\ref{eq:inth-4}) under the same initial condition. 
Substituting Eq. (\ref{eq:inth-10.1}) into the left side of Eq. (\ref{eq:inth-4}), we have
\begin{eqnarray}
\frac{\partial{\hat{H}(q_i,p_i)}}{\partial {q_i}}&=&\frac{\partial H(q_i,p_i)}{\partial {q_i}}
-\frac{\partial{W_j(q_j)}}{\partial {q_i}} \nonumber\\
\frac{\partial{\hat{H}(q_i,p_i)}}{\partial {p_i}}&=&\frac{\partial H(q_i,p_i)}{\partial {p_i}}
-\frac{\partial{W_j(q_j)}}{\partial {p_i}}.
\label{eq:inth-11}
\end{eqnarray}
It must be noted that, although $q_i$ and $p_i$ are considered as distinct variables in Hamiltonian mechanics, we can consider $q_i$ and
$p_i$ as dependent variables in constructing of $\hat{H}$.
On the phase curve $\gamma$ we have
\begin{eqnarray}
 \frac{\partial{{W_j(q_j) }}}{\partial {q_i}}&=&
\frac{\partial{(\int_{q_{j0}}^{q_j}\mathcal{F}_j(q_j) \dif q_j+W_i(q_{i0}))}}{\partial {q_i}}
=\mathcal{F}_i(q_i) \nonumber \\
\frac{\partial{{W_j(q_j) }}}{\partial {p_i}}&=0,
\label{eq:inth-12}
\end{eqnarray}
where $\mathcal{F}_i(q_i)$ is equal to the damping force $F_i$ on $\gamma$. Hence under the initial condition $q_0, p_0$, 
Eq. (\ref{eq:inth-4}) is satisfied. As a result, we can state that the phase curve  of Eq. (\ref{eq:inth-3}) coincides with that of 
Eq. (\ref{eq:inth-2}) subject to the initial condition;  and $\hat{H}$ represented by Eq. (\ref{eq:inth-10.1}) is 
the Hamiltonian of the conservative system represented by Eq. (\ref{eq:inth-3}).

Next, we must prove the second part of Proposition \ref{pro:1}: the uniqueness of the common phase curve.

 We assume that Eq. (\ref{eq:inth-3}) shares two common phase curves, $\gamma_1$ and $\gamma_2$, with Eq. (\ref{eq:inth-2}). 
Let $z_1$ be a point of $\gamma_1$ at the time $t$, $z_2$ a point of $\gamma_2$ at the time $t$, and 
$g^t$ the Hamiltonian phase flow of Eq. (\ref{eq:inth-3}). Suppose that a domain $\Omega$ at $t$ which contains only points $z_1$ and $z_2$, is 
not only a subset of the phase space of the non-conservative system (\ref{eq:inth-2}) but also that of the phase space of the conservative system (\ref{eq:inth-3}). 
Hence, there exists a phase flow $\hat{g}^t$ composed of $\gamma_1$ and $\gamma_2$, that is the phase flow of Eq. (\ref{eq:inth-2}) restricted by $\Omega$.
 According to the following form of Liouville's theorem as stated in \cite{Arnold1978}:
\begin{theorem}
The phase flow of Hamilton's equations of motion preserves volume: for any region $D$ in the phase space we have
\[
 volume\ of\ g^tD=volume\ of\ D
\]
where $g^t$ is the one-parameter group of transformations of phase space
\[
 g^t:(p(0),q(0))\longmapsto:(p(t),q(t)),
\]
\label{Liouville}
\end{theorem}
that preserves the volume of $\Omega$. This implies that the phase flow of Eq. (\ref{eq:inth-2}) $\hat{g}^t$ also preserves the volume of $\Omega$. 
 However, system (\ref{eq:inth-2}) is not conservative, which conflicts
with Louisville's theorem; hence only a phase curve of Eq. (\ref{eq:inth-3}) coincides with that of Eq. (\ref{eq:inth-2}).

\qed
\end{proof}
In the next subsection a simple example is given to illustrate  the process involved in applying Proposition \ref{pro:1}. 
\subsection{One-dimensional Simple Example}
Consider a special one-dimensional simple mechanical system:
\begin{equation}
 \ddot{x}+c\dot{x}=0,
\label{eq:simp_1d}
\end{equation}
where $c$ is a constant. The exact solution of the equation above is
\begin{equation}
 x=A_1+A_2e^{-ct},
\label{eq:sol_1d}
\end{equation}
where $A_1,A_2$ are constants. Differentiation gives the velocity:
\begin{equation}
 \dot{x}=-cA_2e^{-ct}.
\label{eq:sol_1dv}
\end{equation}
From the initial condition $x_0,\dot{x}_0$, we find $A_1=x_0+\dot{x}_0/c, A_2=-\dot{x}_0/c$. Inverting Eq. (\ref{eq:sol_1d}) yields
\begin{equation}
 t=-\frac{1}{c}\ln\frac{x-A_1}{A_2}
\label{eq:tfunc}
\end{equation}
and by substituting into Eq. (\ref{eq:sol_1dv}), such we have
\begin{equation}
 \dot{x}=-c(x-A_1)
\label{eq:dx}
\end{equation}
The dissipative force $F$ in the dissipative system (\ref{eq:simp_1d}) is
\begin{equation}
 F=c\dot{x}.
\label{eq:F}
\end{equation}
Substituting Eq. (\ref{eq:dx}) into Eq. (\ref{eq:F}), the conservative force $\mathcal{F}$ is expressed as
\begin{equation}
 \mathcal{F}=-c^2(x-A_1);
\label{eq:mF}
\end{equation}
Clearly, the conservative force $\mathcal{F}$ depends on the initial condition of the dissipative system (\ref{eq:simp_1d}), in other words,  
an initial condition determines a conservative force. Consequently, a new conservative system yields
\begin{equation}
 \ddot{x}+\mathcal{F}=0\rightarrow \ddot{x}-c^2(x-A_1)=0.
\label{eq:1d_eq_consys}
\end{equation}
The stiffness coefficient in this equation must be negative. One can readily verify that the particular solution (\ref{eq:sol_1d}) 
of the dissipative system can satisfy the conservative one (\ref{eq:1d_eq_consys}). This point agrees with Proposition (\ref{pro:1}).

The potential of the conservative system (\ref{eq:1d_eq_consys}) is 
\[
 V=\int_0^x \left[ -c^2(x-A_1) \right]\dif x =-\frac{c^2}{2}x^2+c^2A_1x 
\]
Therefore the Hamiltonian is
\[
 \hat{H}=T+V=\frac{1}{2}p^2-\frac{c^2}{2}x^2+c^2A_1x,
\]
where $p=\dot{x}$.

Furthermore, Proposition (\ref{pro:1}) can be depicted by Fig. \ref{fig:relation}. The phase flow of conservative system (\ref{eq:sol_1d}) 
transforms the red area in phase space to the purple area; the phase flow of conservative system (\ref{eq:1d_eq_consys}) transforms 
the red area to the green area. The blue curve in Fig. \ref{fig:relation} illustrates the common phase curve. If one draws more common phase 
curves and phase flows, the picture will like a flower, the phase flow of the nonconservative system likes a pistil and phase flows conservative systems like 
petals.

\begin{figure}
\begin{center}
\scalebox{0.4}{\includegraphics{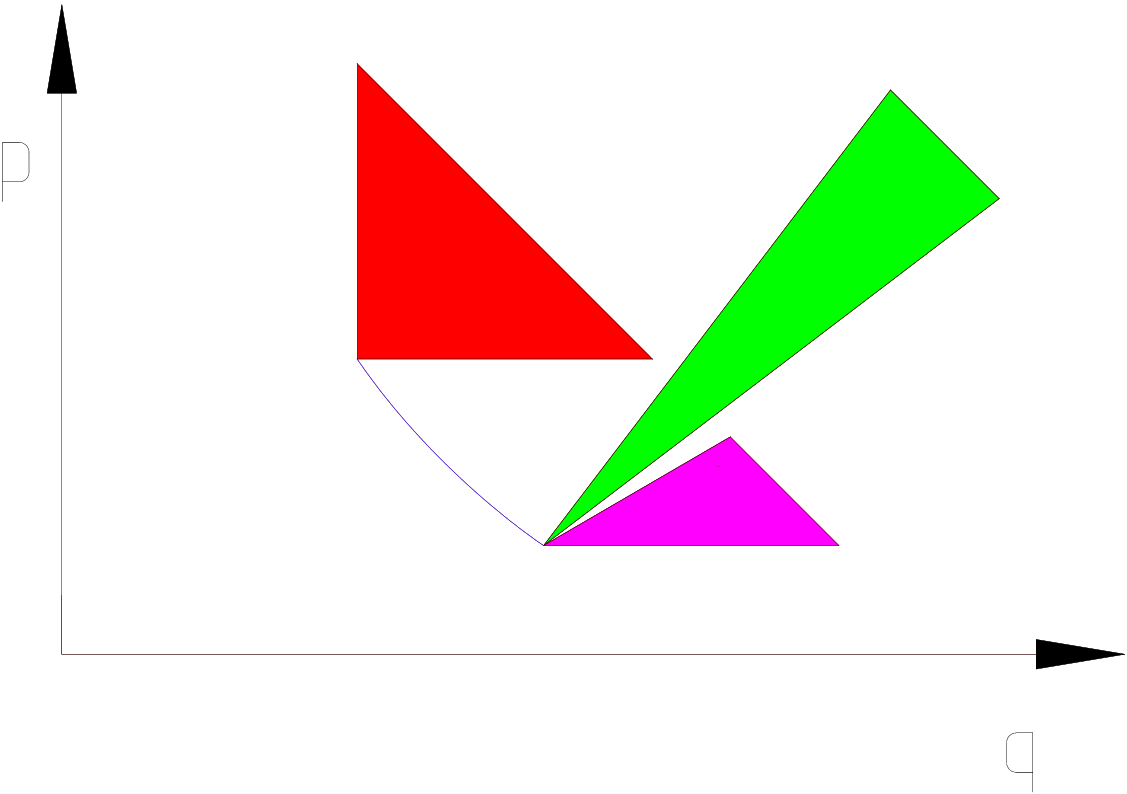}}
\caption{Relationship between nonconservative system (\ref{eq:simp_1d}) and conservative one (\ref{eq:1d_eq_consys})} 
\end{center}
\label{fig:relation}       
\end{figure}

\subsection{Obtaining the Equivalent Stiffness Matrix $\tilde{K}$} \label{sec:2.2}
In accordance with Proposition \ref{pro:1}, a conservative mechanical system was found associated with the dissipative system (\ref{eq:eq1}) in 
addition to its initial conditions. Subject to these initial conditions, the dissipative system (\ref{eq:eq1}) possesses a common phase curve $\gamma$
 with the conservative system. As in Eq. (\ref{eq:asumption}), we can consider that the components of the damping force $\tensor{C}\dot{\bm{q}}$ determine 
the components of a conservative force on the phase curve $\gamma$
\begin{equation}
\begin{array}{ccc}
c_{11}\dot{q}_1=\varrho_{11}(q_1)&\dots&c_{1n}\dot{q}_n=\varrho_{1n}(q_1)\\
\vdots&\ddots&\vdots\\
c_{n1}\dot{q}_1=\varrho_{21}(q_n)&\dots&c_{nn}\dot{q}_n=\varrho_{nn}(q_n).
\end{array}
\label{eq:ex2-4}
\end{equation}
For convenience, this conservative force is assumed to be an elastic restoring force: 
\begin{equation}
\begin{array}{ccc}
\varrho_{11}(q_1)=\kappa_{11}(q_1)q_1&\dots&\varrho_{1n}(q_1)=\kappa_{1n}(q_1)q_1\\
\vdots&\ddots&\vdots\\
\varrho_{n1}(q_1)=\kappa_{n1}(q_n)q_n&\dots &\varrho_{nn}(q_n)=\kappa_{nn}(q_n)q_n .
\end{array}
\label{eq:ex2-5}
\end{equation}

In a similar manner, the components of the non-conservative force $\tensor{\hat{K}}\bm{q}$ are equal to the 
components of a conservative force on the phase curve $\gamma$
\begin{equation}
\begin{array}{ccc}
\hat{K}_{11}q_1=\chi_{11}(q_1)&\dots&\hat{K}_{1n}q_n=\chi_{1n}(q_1)\\
\vdots&\ddots&\vdots\\
\hat{K}_{n1}q_1=\chi_{21}(q_n)&\dots&\hat{K}_{nn}q_n=\chi_{nn}(q_n).
\end{array}
\label{eq:ex2-4a}
\end{equation}
The conservative force can likewise be assumed to an elastic restoring force: 
\begin{equation}
\begin{array}{ccc}
\chi_{11}(q_1)=\lambda_{11}(q_1)q_1&\dots&\chi_{1n}(q_1)=\lambda_{1n}(q_1)q_1\\
\vdots&\ddots&\vdots\\
\chi_{n1}(q_1)=\lambda_{n1}(q_n)q_n&\dots &\chi_{nn}(q_n)=\lambda_{nn}(q_n)q_n .
\end{array}
\label{eq:ex2-5a}
\end{equation}
By an appropriate transformation, an equivalent stiffness matrix $\tensor{\tilde{K}}$ that is diagonal in form can be obtained
\begin{equation}
\tensor{\tilde{K}}_{ii}=\sum_{l=1}^n \kappa_{il}(q_l)+\lambda_{il}(q_l).
\label{eq:ex2-5b}
\end{equation}

Consequently, an $n$-dimensional conservative system is obtained
\begin{equation}
 \bm{\ddot{q}}+(\tensor{\check{K}}+\tensor{\tilde{K}})\bm{q}=0
\label{eq:ex2-6}
\end{equation}
which shares the common phase curve $\gamma$ with the $n$-dimensional damping system described by (\ref{eq:eq1}). 
In this paper, the conservative system is called the 'substitute' conservative system.
The Lagrangian of Eqs.(\ref{eq:ex2-6}) is
\begin{equation}
 \hat{L}=\frac{1}{2}\dot{\bm{q}}^T\dot{\bm{q}}-\int_{\bm{0}}^{\bm{q}}(\tensor{\check{K}}\bm{q})^T\dif \bm{q}-
\int_{\bm{0}}^{\bm{q}} (\tilde{\tensor{K}}\bm{q})^T \dif \bm{q},
\label{eq:ex2-7a}
\end{equation}
with the Hamiltonian
\begin{equation}
 \hat{H}=\frac{1}{2}\bm{p}^T\bm{p}+\int_{\bm{0}}^{\bm{q}}(\tensor{\check{K}}\bm{q})^T\dif \bm{q}+
\int_{\bm{0}}^{\bm{q}} (\tilde{\tensor{K}}\bm{q})^T \dif \bm{q},
\label{eq:ex2-7}
\end{equation}
where $\bm{0}$ is the zero vector, and $\bm{p}=\dot{\bm{q}}$. Here $\hat{H}$ in Eq. (\ref{eq:ex2-7}) is the mechanical energy of the 
conservative system (\ref{eq:ex2-6}), because $\int_{\bm{0}}^{\bm{q}} (\tilde{\tensor{K}}\bm{q})^T \dif \bm{q}$
 is a potential function such that $\hat{H}$ is independent of the path taken in phase space. It should be noted, that this procedure in 
Sec. \ref{sec:2.2} should be carried out via numerical methods.

\section{Derivation of Hamiltonian Description of Dissipative Mechanical Systems}\label{sec:3}
\cite{RickSalmon1988} and \cite{RevModPhys.70.467} described in detail the Hamiltonian formalism of an ideal fluid in Lagrangian 
variables in detail. In the Hamiltonian description, a fluid is described as a collection of fluid 
particles or elements in the domain $D$. The Lagrangian and Hamiltonian of an ideal fluid, which are the sum of the respective Lagrangian 
and Hamiltonian of particles (Lagrangian density and Hamiltonian density) in the domain $D$, are integrals over this domain in an initial configuration 
space. In classical mechanics context, Eq. (\ref{eq:eq1}) can be implemented to describe motion of a 
set of particles in an n-dimensional configuration space. We then ask whether we can regard the sum of Lagrangian (resp. Hamiltonian)
 of these particles as Lagrangian (resp. Hamiltonian) of the whole set.  
In the previous section, we had demonstrated that a phase curve for conservative
system (\ref{eq:ex2-6}) is identical with one for the dissipative system (\ref{eq:eq1}). 
The question then is why not consider the Lagrangian of system (\ref{eq:ex2-6}) as a Lagrangian density?

To begin, Let us review the Hamiltonian description of an ideal fluid 
in Lagrangian variables, and then give the infinite dimensional Hamiltonian formulation of system (\ref{eq:eq1}). 
Suppose the coordinate of a fluid particle at time $t$ is
\begin{equation}
\bm{q}=\bm{q}(\bm{a},t),
\label{eq:fluid_cod}
\end{equation}
where $\bm{q}=\{q_1,q_2,q_3\}$, and $\bm{a}=\{a_1,a_2,a_3\}$ is the position of the particle at initial time $t=t_0$. 
We assume that $\bm{a}$ varies over a fixed domain $D$, which is completely filled with fluid, and that the functions $q$ map $D$ onto itself.

In  Lagrangian variables $\bm{a}$, the Lagrangian density of the fluid particles is 
\begin{equation}
 \mathcal{L}_{f}(\bm{q},\dot{\bm{q}},\partial \bm{q}/\partial \bm{a},t )
=\frac{1}{2}\rho_0\dot{\bm{q}}^2-\rho_{0}E(s_0,\rho_0/\tensor{\mathcal{J}})-\phi,
\label{eq:fluid_lagdens}
\end{equation}
where $\rho_0=\rho_0(\bm{a})$ is a given initial density distribution, $\dot{\bm{q}}$ is the velocity of the fluid particle, a shorthand 
$\dot{\bm{q}}^2=\delta_{ij}q_iq_j$ is used, $E$ is the energy per unit mass, $s_0$ is the entropy per unit mass at time
 $t_0$, $\tensor{\mathcal{J}}=\det(\partial{q}^i/\partial{a}^j)$, 
and $\phi$ is a potential function for external conservative forces. The intensive quantities, pressure and temperature are
obtained as follows:
\begin{equation}
 T=\frac{\partial E}{\partial s}(s,\rho),\ \ p=\rho^2\frac{\partial E}{\partial \rho}(s,\rho)
\end{equation}

Therefore, we have the Lagrangian functional of the fluid particles of the domain $D$:
\begin{equation}
 L_f(\bm{q},\dot{\bm{q}})=\int_D \mathcal{L}_f\dif^{3}\bm{a}
=\int_D \left[\frac{1}{2}\rho_0\dot{\bm{q}}^2-\rho_{0}E(s_0,\rho_0/\tensor{\mathcal{J}})-\phi\right]\dif^{3}\bm{a},
\label{eq:fluid_lag}
\end{equation}
where $\dif^3\bm{a}=\dif a_1\dif a_2 \dif a_3$. 
Thus the action functional is given by
\begin{equation}
 S_f[\bm{q}]=\int_{t_0}^{t^1}\dif t\int_D L_f[\bm{q},\dot{\bm{q}}]\dif^3 \bm{a}
=\int_{t_0}^{t^1}\dif t\int_D \left[\frac{1}{2}\rho_0\dot{\bm{q}}^2-\rho_{0}E-\phi\right]\dif^{3}\bm{a}
\label{eq:fluid_action}
\end{equation}
Observe that this action functional is akin to that for a system of
finite-degree-of-freedom, as treated above, except that the sum over particles is replaced by integration over $D$, i.e.,
\begin{equation}
 \int_D \dif^3 \bm{a}\leftrightarrow \sum_i
\label{eq:int2sum}
\end{equation}

By a functional differentiation, we have a canonical momentum density 
\begin{equation}
 \bm{\varpi}_i(\bm{a},t)=\frac{\delta L_f}{\delta \dot{\bm{q}}_i(\bm{a},t)}=\rho_0 \dot{\bm{q}}_i,
\label{eq:fluid_md}
\end{equation}
and by a Legendre transform a generalized Hamiltonian quantity 
\begin{equation}
 H_f[\bm{q},\bm{\varpi}]=\int_D\left[\bm{\varpi} \dot{\bm{q}}-\mathcal{L}_f\right]\dif^3{\bm{a}}
=\int_D \left[\frac{\dot{\bm{\varpi}^2}}{2\rho_0}+E+\phi \right]\dif^3{\bm{a}},
\label{eq:fluid_Hamilotian}
\end{equation}
where $\rho_0\dot{\bm{q}}^2/2+E+\phi=\mathcal{H}_f$ can be considered as a Hamiltonian density.
The generalized Hamilton's equations of motion are 
\begin{equation}
 \dot{\varpi}_i=-\frac{\delta H_f}{\delta q_i}, \ \ \ \ \dot{q}_i=\frac{\delta H_f}{\delta \bm{\varpi}_i}.
\label{eq:fluid_heq}
\end{equation}
These equations can also be written in terms of the Poisson bracket defined by \cite{RevModPhys.70.467},
\begin{equation}
  \{F,G\}=\int_D \left[\frac{\delta F}{\delta \bm{q}}\cdot \frac{\delta G}{\delta \bm{\varpi}}-
            \frac{\delta G}{\delta q}\cdot \frac{\delta F}{\delta \bm{\varpi}}\right]\dif ^3 \bm{a}
\label{eq:fluid_psbr}
\end{equation}
viz.,
\begin{equation}
 \dot{\bm{\varpi}}_i=\{\bm{\varpi}_i,H_f\},\ \ \ \ \dot{\bm{q}}_i=\{\bm{q}_i,H_f\}
\end{equation}
Here $\delta q_i(\bm{a})/\delta q_j(\bm{a}')=\delta_{ij}\delta(\bm{a}-\bm{a}')$ has been used,  where $\delta(\bm{a}-\bm{a}')$ is a three-dimensional
 Dirac delta function.

If $a$ is fixed, Eq. (\ref{eq:fluid_heq}) represents the motion of a particle of the fluid with initial conditions 
$\left.q_i \right|_{t=0}=const,\left. \varpi_i \right|_{t=0} \in \mathrm{R}$, i.e. a set of phase curves. 
In contrast, Eq. (\ref{eq:ex2-6}) with initial conditions $\left.q_i \right|_{t=0}=const,\left. \dot{q} \right|_{t=0}=const$ represents 
a phase curve. There is a precedent, in that \cite{Morrison1980} proposed a Hamiltonian description of
 Poisson-Vlasov equations with a Hamiltonian quantity, which is an integral over phase space.
 These ideas of \cite{RickSalmon1988} and \cite{RevModPhys.70.467,Morrison1980} has given impetus to consider the mechanical 
system (\ref{eq:eq1}) as a special fluid constituting a collection of fluid particles in phase space. Therefore, 
we label a particle in the phase space by
\begin{equation}
 \bm{a}=(\bm{q}_0,\dot{\bm{q}}_0)=(q_0^1,\dots,q_0^n,\dot{q}_0^1,\dots,\dot{q}_0^n),
\label{eq:lag_vrb}
\end{equation}
the coordinate of a particle in the configuration space by
\begin{equation}
 \bm{q}=\bm{q}(\bm{a},t)=(q_1(\bm{a},t),\dots,q_n(\bm{a},t),
\end{equation}
and set $\rho_o=1$.
One thing to be noted is that, Eq. (\ref{eq:ex2-6}) describes the motion of individual particles without interaction and with constant density $\rho=1$. 
In continuum mechanics, the internal energy function $E$ is the source of the interaction (stress).
Therefore, analogous to Eq. (\ref{eq:fluid_lagdens}), one can consider $\hat{L}$ in Eq. (\ref{eq:ex2-7a}) as a Lagrangian 
density of system (\ref{eq:eq1})
\begin{equation}
\mathcal{L}=\hat{L}=\frac{1}{2}\dot{\bm{q}}^T\dot{\bm{q}}-\int_{\bm{0}}^{\bm{q}}(\tensor{\check{K}}\bm{q})^T\dif \bm{q}-
\int_{\bm{0}}^{\bm{q}} (\tilde{\tensor{K}}\bm{q})^T \dif \bm{q}.
\label{eq:Lag_denst}
\end{equation}
Thus, the Lagrangian functional of Eq. (\ref{eq:eq1}) can be presented as follows:
\begin{equation}
L[q,\dot{q}]=\int_D \mathcal{L}\dif^{2n} \bm{a}=\int_D\left[\frac{1}{2}\dot{\bm{q}}^T\dot{\bm{q}}-\int_{\bm{0}}^{\bm{q}}(\tensor{\check{K}}\bm{q})^T\dif \bm{q}-
\int_{\bm{0}}^{\bm{q}} (\tilde{\tensor{K}}\bm{q})^T \dif \bm{q} \right]\dif^{2n} \bm{a},
\label{eq:nh2}
\end{equation}
where $\dif^{2n}=\dif^n \bm{q}_0\dif^n \dot{\bm{q}}_0=\dif q^1_0\dots\dif q^n_0\dif \dot{q}^1_0\dots\dif\dot{q}^n_0$, 
from which the action functional can be presented as follows:
\begin{equation}
 S[q]=\int^{t1}_{t0}L[q,\dot{q}]\dif t
=\int^{t1}_{t0}\dif t\int_D \left[\frac{1}{2}\dot{\bm{q}}^T\dot{\bm{q}}-\int_{\bm{0}}^{\bm{q}}(\tensor{\check{K}}\bm{q})^T\dif \bm{q}-
\int_{\bm{0}}^{\bm{q}} (\tilde{\tensor{K}}\bm{q})^T \dif \bm{q} \right]\dif^{2n}\bm{a}
\label{eq:nh3}
\end{equation}
According to Hamilton's theorem, we have the functional derivative $\delta S/\delta \bm{q}(a,t)=0$:
\begin{eqnarray}
 \frac{\delta S}{\delta \bm{q}(\bm{a},t)}&=&\frac{\partial \mathcal{L}}{\partial \bm{q}(\bm{a},t)}
-\frac{\dif}{\dif t}\frac{\partial\mathcal{L}}{\partial \dot{\bm{q}}(\bm{a},t)} \nonumber\\
                               &=&-\ddot{\bm{q}}(\bm{a},t)-\tensor{\check{K}}\bm{q}-\tilde{\tensor{K}}\bm{q}=0 
\label{eq:nh4}
\end{eqnarray}
The equation above implies that under the initial conditions $\bm{a}$, a conservative system exists, the control equation of which
is Eq. (\ref{eq:ex2-6}), the phase curve of which coincides with that of a damped oscillator.
A canonical momentum density for the dissipative system (\ref{eq:eq1}) can be defined as the functional derivative
\begin{equation}
 \pi_i(\bm{a},t)=\frac{\delta L}{\delta \dot{q}_i(\bm{a})}=\dot{\bm{q}_i},
\label{eq:nh5}
\end{equation}
, while the classical canonical momentum is defined as a partial derivative.
By a Legendre transform, we have the generalized Hamiltonian $H_{Dp}$
\begin{equation}
 H_{Dp}[\bm{\pi},\bm{q}]=\int_D \dif^{2n}\bm{a}\left[\bm{\pi} \cdot \dot{\bm{q}}-\mathcal{L}\right]
=\int_D \dif^{2n}\bm{a} \left[\frac{1}{2}\bm{\pi}^T\bm{\pi}+\int_{\bm{0}}^{\bm{q}}(\tensor{\check{K}}\bm{q})^T\dif \bm{q}+
\int_{\bm{0}}^{\bm{q}} (\tilde{\tensor{K}}\bm{q})^T \dif \bm{q}\right],
\label{eq:nh6}
\end{equation}
where $\bm{q}=\bm{q}(\bm{a},t)$, with Hamiltonian density 
\[
 \mathcal{H}_{Dp}=\frac{1}{2}\dot{\bm{q}}^T\dot{\bm{q}}+\int_{\bm{0}}^{\bm{q}}(\tensor{\check{K}}\bm{q})^T\dif \bm{q}+
\int_{\bm{0}}^{\bm{q}} (\tilde{\tensor{K}}\bm{q})^T \dif \bm{q}.
\]
Thus, the generalized Hamilton's equations of motion for the dissipative system (\ref{eq:eq1}) are
\begin{equation}
 \dot{\pi}_i=-\frac{\delta H_{Dp}}{\delta q_i}, \ \ \dot{q}_i=\frac{\delta H_{Dp}}{\delta \pi_i} .
\label{eq:nh7}
\end{equation}
\begin{definition}
For two functionals $F[\bm{\pi}(\bm{a}),\bm{q}(a)]$ and $G[\bm{\pi}(a),\bm{q}(a)]$, there exists in a domain $D$ of the phase space, a functional 
called the Poisson bracket of $F$ and $G$
\begin{equation}
 \{F,G\}[\bm{\pi}(\bm{a}),\bm{q}(\bm{a})]=\int_D \left[\frac{\delta F}{\delta \bm{q}(\bm{a}')}\cdot \frac{\delta G}{\delta \bm{\pi}(\bm{a}')}-
            \frac{\delta G}{\delta \bm{q}(\bm{a}')}\cdot \frac{\delta F}{\delta \bm{\pi}(\bm{a}')}\right]\dif ^{2n} \bm{a},
\label{eq:pos_def}
\end{equation}
\end{definition}
The Generalized Hamilton's equations of motion can also be represented in terms of the Poisson bracket (\ref{eq:pos_def})
viz.,
\begin{equation}
 \dot{\pi}_i=\{\pi_i,H_{Dp}\}
,\dot{q}_i=\{q_i,H_{Dp}\}.
\label{eq:nh9}
\end{equation}

Expanding $\{\pi_i,H_{Dp}\}$, we have 
\begin{eqnarray}
 \{\pi_i(\bm{a}),H_{Dp}\}&=&\frac{\delta \pi_i(\bm{a})}{\delta q_j(\bm{a}')}\frac{\delta H_{Dp}}{\delta \pi_j(\bm{a}')}-
                \frac{\delta \pi_i(\bm{a})}{\delta \pi_j(\bm{a}')}\frac{\delta H_{Dp}}{\delta q_j(\bm{a}')}\nonumber\\
                   &=&-\delta_{ij}\delta(\bm{a}-\bm{a}')\frac{\delta H_{Dp}}{\delta q_j(\bm{a}')}\nonumber\\
                   &=&-\frac{\delta H_{Dp}}{\delta q_i(\bm{a})},
\label{eq:expand_1}
\end{eqnarray}
Here $\delta \pi_i(\bm{a})/\delta \pi_j(\bm{a}')=\delta_{ij}\delta(\bm{a}-\bm{a}')$ has been used,  where $\delta(\bm{a}-\bm{a}')$ is a three-dimensional
 Dirac delta function.
Analogous to Eq. (\ref{eq:int2sum}),
\begin{equation}
 \int_D \dif^{2n} \bm{a}\leftrightarrow \sum_i, \ \ H_{Dp}=\sum_i \mathcal{H}_{Dp}(\bm{a}).
\label{eq:int2sum1}
\end{equation}
According to Eq. (\ref{eq:int2sum1}), we can derive from Eq. (\ref{eq:expand_1})
\begin{equation}
\dot{\pi}_i(\bm{a})=\{\pi_i(\bm{a}),H_{Dp}\}=-\frac{\delta H_{Dp}}{\delta q_i(\bm{a})}=-\frac{\partial \mathcal{H}_{Dp}(\bm{a})}{\partial q_i(\bm{a})},
\label{eq:inf2fin_1}
\end{equation}
and similarly, 
\begin{equation}
\dot{q}_i(\bm{a})=\{q_i(\bm{a}),H_{Dp}\}=\frac{\delta H_{Dp}}{\delta \pi_i(\bm{a})}=\frac{\partial \mathcal{H}_{Dp}(\bm{a})}{\partial \pi_i(\bm{a})}
\label{eq:inf2fin_2}
\end{equation}
Therefore, we can assert that Eq. (\ref{eq:inf2fin_1}) and Eq. (\ref{eq:inf2fin_2}) describe a phase curve which is a common phase 
curve of the dissipative system and a conservative system subject to the initial conditions $\bm{a}$.

From the Hamilton's equation of motion (\ref{eq:nh7}), we can derive the total energy conservative principle
\begin{eqnarray*}
 \delta H_{Dp}&=&\int_D\left[\frac{\delta H_{Dp}}{\delta q_i(\bm{a})}\delta q_i(\bm{a})
          +\frac{\delta H_{Dp}}{\delta \pi_i(\bm{a})}\delta \pi_i(\bm{a})\right]\dif ^{2n}\bm{a}\\
         &=&\int_D\left[\frac{\delta H_{Dp}}{\delta q_i(\bm{a})}\frac{\dif q_i(\bm{a})}{\dif t}\dif t+ 
          \frac{\delta H_{Dp}}{\delta \pi_i(\bm{a})}\frac{\dif \pi_i(\bm{a})}{\dif t}\dif t \right]\dif ^{2n}\bm{a}\\
         &=&\int_D\left[\frac{\delta H_{Dp}}{\delta q_i(\bm{a})}\frac{\delta H_{Dp}}{\delta \pi_i(\bm{a})}\dif t-
             \frac{\delta H_{Dp}}{\delta \pi_i(\bm{a})}\frac{\delta H_{Dp}}{\delta q_i(\bm{a})}\dif t \right]\dif ^{2n}\bm{a}\\
         &=&0
\end{eqnarray*}

\section{Conclusion}
The following conclusions can be drawn. The infinite-dimensional description(\ref{eq:nh5}),(\ref{eq:nh6}),
(\ref{eq:nh7}),(\ref{eq:pos_def},(\ref{eq:nh9}) can describe a dissipative mechanical system based on Proposition \ref{pro:1}: 
For any non-conservative classical mechanical system and arbitrary initial condition, there exists a conservative system; both systems sharing
one and only one common phase curve; and the value of the Hamiltonian of the conservative system is equal to the sum of the total energy of 
the non-conservative system on the aforementioned phase curve
and a constant depending on the initial condition. In fact, if the generalized Hamilton's equation of motion in (\ref{eq:nh7}), (\ref{eq:nh9}) is 
constrained subject to initial conditions $\bm{a}$, the generalized Hamilton's equations determine a phase curve of the afore-mentioned conservative 
system (\ref{eq:ex2-6}). As the classical Hamilton's equations represent the conservation of mechanical energy principle, 
the generalized Hamilton's equations of (\ref{eq:nh7},\ref{eq:nh9}) describe an analogous the conservation of total energy principle. One can 
assert that the generalized Hamilton's equations (\ref{eq:nh7}),(\ref{eq:nh9}) are the generalization of the classic Hamilton's equations.

\end{document}